\newcommand{\reglenum}[1]{\ensuremath{\mathfrak{#1}}}
\newcommand{\rn}{\reglenum}
\newcommand{\rnnl}{$\rn{9}_\ell$\xspace}  %Rule Number Neuf, Low
\newcommand{\rnnh}{$\rn{9}_h$\xspace}%Rule Number Neuf, High
\newcommand{\rnchl}{$\rn{58}_\ell$\xspace}  %Rule Number Cinquante-Huit, Low
\newcommand{\rnchh}{$\rn{58}_h$\xspace}%Rule Number Cinquante-Huit, High
\newcommand{\rncdl}{$\rn{110}_\ell$\xspace}  %Rule Number 110, Low
\newcommand{\rncdh}{$\rn{110}_h$\xspace}%Rule Number 110, High
\newcommand{\rncvsl}{$\rn{126}_\ell$\xspace}  %Rule Number 126, Low
\newcommand{\rncvsh}{$\rn{126}_h$\xspace}%Rule Number 126, High
\begin{document}
\title[Coalescing Cellular Automata]{Coalescing Cellular Automata\\
Synchronizing CA by Common Random Source and Varying Asynchronicity}
\hypersetup{pdftitle={Coalescing Cellular Automata -- Synchronizing CA by Common Random Source and Varying Asynchronicity}}

\author{Rouquier, Morvan}
{Jean-Baptiste Rouquier\instref{univ-lyon}\instref{ensl}\instref{ixxi}
\and Michel Morvan\instref{univ-lyon}\instref{ensl}\instref{ixxi}\instref{ehess}\instref{sfi}}
\hypersetup{pdfauthor={Jean-Baptiste Rouquier and Michel Morvan}}

\institute{univ-lyon}{Universit\'e de Lyon}
% \institute{ensl}{\'Ecole Normale Sup\'erieure de Lyon, LIP, 46 allée d'Italie, F-69364 LYON cedex 03, France}
\institute{ensl}{ENS Lyon, LIP, 46 allée d'Italie, F-69364 LYON cedex 03, France}
\institute{ixxi}{Institut des Syst\`emes Complexes Rh\^one-Alpes (IXXI)}
\institute{ehess}{EHESS}
\institute{sfi}{Santa Fe Institute}

\date{}
% \date{submitted to Journal of Cellular Automata}

% \pacs{89.75.Fb,89.75.Kd,05.65.+b,02.50.Tt}

\email{\{jean-baptiste.rouquier, michel.morvan\}@ens-lyon.fr}
\maketitle

\begin{abstract}

We say that a Cellular Automata (CA) is coalescing when its execution on two
distinct (random) initial configurations in the same asynchronous mode
(the same cells are updated in each configuration at each time step)
makes both configurations become identical after a reasonable time.
We prove coalescence for two elementary rules, non coalescence for two other, and show
that there exists infinitely many coalescing CA.
We then conduct an experimental study on all elementary CA and show that some rules exhibit a phase
transition, which belongs to the universality class of directed percolation.
\end{abstract}

\keywords{synchronization, coalescence, asynchronism, asynchronous cellular automata, directed percolation, coupling, robustness, elementary cellular automata,
stochastic process, discrete dynamical system, phase transition}
\hypersetup{pdfkeywords={synchronization, coalescence, asynchronism, asynchronous cellular automata, directed percolation, coupling, robustness, elementary cellular automata, stochastic process, discrete dynamical system, phase transition}}

\section{Introduction}
The \emph{coalescence} phenomenon, as we call it, has been observed for the
first time by Fates~\cite{async_eca_density_robust}, in the context of
asynchronous cellular automata. Coalescing CA exhibit the
following behavior: starting from two different initial random configurations
and running the same updating sequence
(the same cells are updated at each time step in both configurations),
the configurations quickly become identical, i.e. both configurations
not only reach the same attractor, but they also both synchronize their
orbits. This of course appears in trivial situations, for example if the CA
converges to a single fixed point, but in~\cite{async_eca_density_robust} it is also observed it in a case
where the asymptotic orbit is absolutely non trivial.

The goal of this paper is to explore this rather strange emergent phenomenon in
which the asymptotic behavior seems to be only related to the (random) sequence of
update of the cells and not to the initial configuration.
This work shows that, in some cases, the randomness used during evolution is as important as the
one used during initialization: this stochastic dynamic, with high entropy, is
perfectly insensitive to initial condition. There is thus no chaos here.

The results presented here are of two kinds. First, we prove the existence of
infinitely many different (we precise this notion) non trivial coalescent CA.
We also prove the existence of non trivial non coalescent CA.
Secondly, we study by simulation the behavior of all elementary CA (the ECA are
the CA with 1 dimension, two states, and two nearest neighbors) with regards to
this coalescence property, in an asynchronous context in which at each step each
cell has a fixed probability $\alpha$ to be updated. We show that over the 88
different ECA, six situations occur: a/ 37 ECA never coalesce; b/ 21 always
coalesce in a trivial way (they converge to a unique fixed point); c/ 5 always
coalesce on non trivial orbits; d/ 15 combine a/, b/ and c/ depending on
$\alpha$;
e/ 7 enter either full agreement (coalescence) or full disagreement;
last, f/
3 combine e/ with either a/ or c/.

We also study the transition between non coalescence and
coalescence when $\alpha$ varies for the ECA that combine a/ and c/:
there is a phase transition belonging to the universality class of directed percolation.
We thus get a new model of directed percolation, with a few variants.

Unlike many directed percolation models, the limit of the sub-critical regime is
neither a single absorbing state, nor a set of fixed points, but a non trivially
evolving phase.

\medskip
This paper is an extended version of~\cite{ICCS06}.

The paper is organized as follows. Section~\ref{defs} gives definitions and
notations.
Section~\ref{sec:proof-non-coalesce} proves non coalescence for two CA.
We prove in Section~\ref{formal-proof} that, under certain conditions, CA \rn{6}
and \rn{7} (using the Wolfram's numbering of ECA) are coalescing and show how to
construct from them coalescing CA with arbitrarily many states.
We also prove
that CA \rn{15} and \rn{170} either coalesce or enter total
disagreement, each case occurring with probability $\frac12$.
In Section~\ref{experimental-study}, we describe the exhaustive simulation study
of all ECA and then check the directed percolation hypothesis.
Moreover, we observe that some CA exhibit two phase
transitions: one for small $\alpha$ and one for high $\alpha$.

\section{Definitions and notations}
\label{defs}
\begin{definition}
  An
  \emph{asynchronous finite CA} is
  a tuple $(Q,\,d,\,V,\,\delta,\,n,\,\mu)$ where
  \begin{itemize}
  \item $Q$ is the set of \emph{states};
  \item $d\in\N^*$ is the \emph{dimension};
  \item $V=\{v_1,\,\dots,\,v_{|V|}\}$, the \emph{neighborhood},
    is a finite set of vectors in $\Z^d$;
  \item $\delta:Q^{|V|}\to Q$ is the \emph{transition rule};
  \item $n\in\N^*$ is the \emph{size};
  \item ${\cal U}:=(\Z/n\Z)^d$ is the \emph{cell space} (with periodic boundary condition);
  \item $\mu$, the \emph{synchronism}, is a probability measure on $\{0,1\}^{\cal U}$.
  \end{itemize}
A \emph{configuration} specifies the state of each cell,
and so is a function
$c:{\cal U}\to Q$.
\end{definition}
If $x\in\{0,1\}^{\cal U}$, let $|x|_1$ be the number of 1 in the coordinates of $x$.

\medskip
The dynamic on such CA is then the following. Let $c_t$ denote the configuration at time $t$,
($c_0$ is the initial configuration).
Let $\set{M_t}{t\in\N}$ be a sequence of independent identically distributed random variables with
distribution $\mu$.
The configuration at time $t+1$ is obtained by
\begin{equation*}
  c_{t+1}(z):=
  \begin{cases}
    c_t(z) & \text{if } M_t(z)=0\\
    \delta\big(c_t(z+v_1),\,\dots,\,c_t(z+v_{|V|})\big) & \text{if } M_t(z)=1\\
  \end{cases}\ .
\end{equation*}
In other words, for each cell $z$, we apply the usual transition rule if
$M_t(z)=1$ and freeze it (keep its state) if $M_t(z)=0$.

\medskip
There are many possibilities to asynchronously run a CA. We use the two most classical ones~(\cite{fates2006}),
 which are defined as follow.

\begin{itemize}
\item {The Partially Asynchronous Dynamic}.
Let $0<\alpha\le 1$.
For each cell, we update it with probability $\alpha$, independently from its
neighbors.
$\mu$ is thus the product measure of Bernoulli distributions:
$\mu(x):=\alpha^{|x|_1}(1-\alpha)^{|x|_0}$ (with $0^0=1$).
The case $\alpha=1$ corresponds to the synchronous dynamic.

\item {The Fully Asynchronous Dynamic}.
At each step, we choose one cell and update it, which defines
$\mu(x)$ as $1/n$ if $|x|_1=1$ and $\mu(x):=0$ otherwise.

This last dynamic can be regarded as the limit, when $\alpha\to0$, of the partially
asynchronous dynamic (however, to simulate $t$ steps of this dynamic with the partially
asynchronous one, we need more than $nt^2$ steps).

It is also equivalent (for the order of updates) to the following.
Consider a continuous time and give a clock to each cell. Each clock measure the
time before the next update: when a clock reaches zero, its owner is updated.
The clock is then reset to a random time (to choose the next update date),
according to an exponential law.
\end{itemize}

The usual way to visualize the dynamics of a cellular automaton, asynchronous or
not, is the space-time diagram.

\begin{definition}[space-time diagram]
  Figure~\ref{fig:space-time50} is an example.
  The \emph{space-time diagram} is obtained by stacking up the successive configurations, i.e.,
  each configuration is plotted horizontally, above the previous one. Time thus goes upwards.
\end{definition}

\begin{figure}[htp]
  \centering
  \includegraphics[width=4cm]{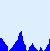}
  \caption{Space-time diagram of \rn{128} (which means ``become $0$ if any of
    your neighbors is $0$'') in the partially asynchronous dynamics.
    $\alpha=0.5$ and the initial configuration has a majority of $1$ (dark cells).
    Space-time diagram of all ECA can be found on
    \url{http://www.rouquier.org/jb/recherche/eca}, with various $\alpha$.}
  \label{fig:space-time50}
\end{figure}

Since we use \emph{pairs} of configurations and study the differences between
both, we will also use two superimposed space-time diagrams, like in Figure~\ref{fig:time-space}.
On such a diagram, we sometimes restrict our attention to cells that agree (ploted light) and cells that don't (plotted dark).
The resulting quotient diagram is called the \emph{agreement/disagreement} space-time diagram.

\bigskip
Let us now introduce the definition of coalescing CA to formalize the observation of~\cite{async_eca_density_robust}.
The principle is to use two initial configurations, and to let them evolve with the
\emph{same} outcome of the random variables $\set{M_t}{t\in\N}$.
In other words, we use two copies of the CA, and at each time step,
we update the same cells in both copies.
This comes down to using the same source of randomness for both copies.

\begin{definition}
  An asynchronous finite CA is \emph{coalescing} if, for any two initial
  configurations,
  applying the same sequence of updates leads both configurations
  to become identical within polynomial expected time (with respect to~$n$).
\end{definition}

Note that it is required for the CA to coalesce quickly enough, where ``quickly'' means in polynomial time.
Section~\ref{sec:proof-non-coalesce} will exhibit a CA that does not coalesce in polynomial time
but nevertheless always coalesce if given enough time.

\smallskip
Any nilpotent CA (converging towards a configuration where all states are
identical) is coalescing if it converges in polynomial time.
But as we will see, and this makes the interest of our study, there exist non
nilpotent coalescing CA, which we call non trivial.

\medskip
The term ``coalescing'' (rather than synchronizing) is used to avoid any confusion with the manipulation of the synchronisms.

In the following, we will often focus on the simplest CA, namely the Elementary CA:
one dimension ($d=1$), $2$ states ($Q=\{0,1\}$), nearest neighbors ($V=\{-1,0,1\}$).
There are $2^8=256$ possible rules, $88$ after symmetry
considerations.
We use the notation introduced by S. Wolfram, numbering the rules from \rn{0} to
\rn{255}. In this notation, a rule $\delta$ is denoted by the number
$2^7\delta(1,1,1)+2^6\delta(1,1,0)+2^5\delta(1,0,1)+2^4\delta(1,0,0)+2^3\delta(0,1,1)+2^2\delta(0,1,0)+2^1\delta(0,0,1)+2^0\delta(0,0,0)$.
For instance, the code for the rule ``minority'' or ``take the least present state of my neighborhood'' (which is \rn{23}) is obtained by reading the last line of Table~\ref{tab:wolfram} as a base 2 number.
We simply write ``\rn{23}'' (with this font) for ``the rule which has code \rn{23}''.
\begin{table}[htp]
  \centering
  \caption{Transition table of \rn{23}}
  \begin{tabular}{rcccccccc}
  neighborhood          & 111 & 110 & 101 & 100 & 011 & 010 & 001 & 000\\
%   read as base 2 number &  7  &  6  &  5  &  4  &  3  &  2  &  1  &  0\\
  result of $\delta$    &  0  &  0  &  0  &  1  &  0  &  1  &  1  &  1
\end{tabular}
  \label{tab:wolfram}
\end{table}

Having defined our objects, the next two sections will prove that there exists
both (non trivial) coalescing CA and non coalescing CA.

\section{Analytical Study}
\subsection{Formal proof of non coalescence}
\label{sec:proof-non-coalesce}

Quite obviously, \rn{204}, which means ``identity'', cannot coalesce: disagreeing cells disagree forever.
We would be more interested in a rule that is not coalescing because it does not reach
agreement in polynomial time, but nevertheless always reaches configuration
where all cells agree if given enough time.
In this part, we will prove that this is the case for \rn{60}.
The proof uses specific properties of \emph{affine} rules, which we will introduce first.

\begin{definition}
  A rule is said to be \emph{affine} if it can be expressed as
  $$\delta(q_1,q_2,q_3) = \epsilon+\sum_{i\in I}q_i \mod 2$$
where $I\subseteq\{1,2,3\}$ and $\epsilon\in\{0,1\}.$
\end{definition}
Those rules where introduced in~\cite{mow84}, a detailed study is in~\cite{ccnc97}.

\begin{lemme}
  \label{lemma:affine-quotient}
  For both asynchronous dynamics, affine rules are exactly the rules for which
  the agreement/disagreement space-time diagram is the space-time diagram of a
  cellular automaton.
\end{lemme}
We call the latter CA the \emph{quotient} rule.
\begin{proof}
  First, it is clear that if the rule is affine, we have what we want: if $q_i$
  are the state of one configuration and $q_i'$ are the states of the second,
the agreement/disagreement state is $1$ if $q_i=q_i'$ and $0$ otherwise, which we write $q_i\oplus q_i'$.
In the next configuration, the agreement/disagreement state is
$\delta(q_1,q_2,q_3) \oplus \delta(q_1',q_2',q_3')
= \epsilon+\sum_{i\in I}q_i + \epsilon+\sum_{i\in I}q_i' \mod 2
=\sum_{i\in I} \big(q_i\oplus q_i'\big) \mod 2$,
which is the result of the application of a CA rule $\delta'$ on the states of the previous configuration.
We even see that the quotient rule $\delta$'
is $\delta'(q_1,q_2,q_3) = \sum_{i\in I}q_i \mod 2$, i.e. the linear part of $\delta$.

Reciprocally, let us assume that the agreement/disagreement diagram is still the diagram of a cellular automaton.
If the rule is constant, $I=\emptyset$ is suitable.

Otherwise, there is some configuration $(q_1,q_2,q_3)$ for which $\delta$ changes if a specific $q_i$ changes.
We can choose it to be $q_1$ (the other cases are symmetrical) i.e.
$\delta(q_1,q_2,q_3) \neq \delta(1-q_1,q_2,q_3)$.
Thanks to the initial assumption, this inequality has to be true for all $q_2$ and $q_3$ (indeed, if there would exist
$q_2'$ and $q'_3$ such that $\delta(q_1,\,q_2',\,q_3')=\delta(1-q_1,\,q_2',\,q_3')$,
we would have two pairs of configurations with the same agreement/disagreement pattern at step 0,
but distinct agreement/disagreement pattern at step 1, violating our hypothesis).

The rule can thus be expressed as
$\delta(q_1,q_2,q_3) = \epsilon+q_1+\delta'(q_2,q_3) \mod 2$.
Since this is true for any neighbor which can influence the outcome of $\delta$, the result follows.
\end{proof}

\begin{proposition}
  For both asynchronous dynamics, \rn{60} is not coalescing.
\end{proposition}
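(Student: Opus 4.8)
The plan is to turn coalescence into a hitting-time problem and then exploit the fact that the target state is almost a dead end. First I would observe that \rn{60} is the affine rule $\delta(q_1,q_2,q_3)=q_1\oplus q_2$, i.e.\ $\epsilon=0$ and $I=\{1,2\}$. By Lemma~\ref{lemma:affine-quotient} its quotient rule is its own linear part, which is again \rn{60}; hence for two copies started from $c_0,c_0'$ and driven by the same updates, the difference $\Delta_t:=c_t\oplus c_t'$ evolves exactly under \rn{60}, and the two copies have coalesced at time $t$ iff $\Delta_t$ equals the all-$0$ configuration $\mathbf 0$. So it suffices to bound from below the hitting time $T$ of $\mathbf 0$ for \rn{60} started from an arbitrary $\Delta_0\neq\mathbf 0$.

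The key step is a combinatorial description of the predecessors of $\mathbf 0$. Suppose one asynchronous step updating the cell set $S$ sends a configuration $c$ to $\mathbf 0$; writing the update cell by cell gives $c_z=0$ whenever $z\notin S$ and $c_z=c_{z-1}$ whenever $z\in S$. If $c$ carries any $1$, say at $z_0$, then $z_0\in S$ and $c_{z_0-1}=c_{z_0}=1$, and iterating this leftward around the cycle forces every cell to be $1$. Thus the only predecessor of $\mathbf 0$ other than $\mathbf 0$ itself is the all-$1$ configuration $\mathbf 1$, and from $\mathbf 1$ the step yields $\mathbf 0$ only when $S={\cal U}$, i.e.\ when all $n$ cells are updated simultaneously. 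I expect this lemma to be the main obstacle, in the sense that it is the single load-bearing observation; once it is in place the two dynamics are dispatched immediately.

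For the fully asynchronous dynamic exactly one cell is updated per step, so for $n\ge 2$ the transition $\mathbf 1\to\mathbf 0$ can never occur: $\mathbf 0$ is unreachable from any $\Delta_0\neq\mathbf 0$, the expected coalescence time is infinite, and \rn{60} is not coalescing. For the partially asynchronous dynamic with $0<\alpha<1$, whenever $\Delta_t\neq\mathbf 0$ the event $\Delta_{t+1}=\mathbf 0$ entails that all $n$ cells are updated at that step, an event of probability $\alpha^{n}$ under $\mu$ and independent of the past. Hence the per-step probability of coalescing is at most $\alpha^{n}$, so $T$ stochastically dominates a geometric variable of parameter $\alpha^{n}$, giving $P(T>t)\ge(1-\alpha^{n})^{t}$ and therefore $E[T]\ge\alpha^{-n}$, which is exponential in $n$ and in particular not polynomial. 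In both cases \rn{60} fails the definition of coalescence.

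Two remarks would round things off. The bound is uniform over the initial pair, so non-coalescence needs no clever choice of $c_0,c_0'$; and in the partially asynchronous case the difference process does reach $\mathbf 1$ and eventually fires the full update with probability one, which is why \rn{60} still coalesces when given unbounded time even though $E[T]$ is super-polynomial --- exactly the behavior announced informally before the statement.
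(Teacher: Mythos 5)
Your proof is correct and follows essentially the same route as the paper: reduce via the affine/quotient lemma to the difference configuration evolving under \rn{60}, then note that the only one-step route to total agreement is from the all-disagree configuration with every cell updated simultaneously, giving a per-step coalescence probability of at most $\alpha^n$ and hence a super-polynomial expected hitting time. Your classification of the predecessors of $\mathbf 0$ is just the backward restatement of the paper's forward observation that a disagreeing cell with an agreeing left neighbour stays in disagreement, and your explicit treatment of the fully asynchronous case and of the geometric lower bound is a slightly more careful write-up of the same argument.
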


\begin{proof}
  \rn{60} is affine, so we can reason on the
  agreement/disagreement diagram. The rule governing this space-time diagram is
  still \rn{60}. This rule means ``XOR between me and my left neighbor''.

Let us distinguish two cases:
  \begin{itemize}
  \item If all cells are in the disagreement state, the total agreement occurs if and only if
    all cells update, which happens with probability $\alpha^n$.
  \item Otherwise, there is a disagreement cell with a left neighbor in the
    agreement state. Whether it updates or not, this cell will stay in the
    disagreement state at the next step.
  \end{itemize}
  The expected time to reach total agreement is thus greater than $\frac1{\alpha^n}$,
  which is asymptotically greater than any polynomial in $n$ for $0<\alpha<1$.
\end{proof}

\begin{remarque}
  If given enough time, \rn{60} always reaches total agreement. Indeed, from any
  configuration, it is possible (though unlikely) to reach agreement: first
  update exactly the agreement cells having a disagreement left neighbor (those
  cell thus go into the disagreement state), and repeat until all cells are in
  the disagreement state. Then update all the cells at once, and total agreement occurs.
  Thus, at each time step, there is a (tiny) non-zero probability to reach total
  agreement.
\end{remarque}

\subsection{Formal proof of coalescence}
\label{formal-proof}
In this section, we will prove that there are infinitely many coalescing CA. For
that, we will prove the coalescence of two particular CA and show how to build an
infinite number of coalescing CA from one of them.

An easy way to do that last point would be to extend a coalescing CA by adding
states that are always mapped to one state of the original CA, regardless of
their neighbors.
However, we consider such a transformation to be artificial since it
leads to a CA that is in some sense identical to the original one.
To avoid this,
we focus on
\emph{state minimal} CA: CA in which any state can be reached (but not necessarily any configuration).
Note that among ECA, only \rn{0} and \rn{255} are not state-minimal.

We will first exhibit two state-minimal coalescing CA
(proposition~\ref{6-et-7-coalescent}); then, using this result, we will deduce the
existence of an infinite number of such CA (theorem~\ref{main-theo});
finally we will describe the coalescent behavior of two others ECA
(proposition~\ref{15-170-semi-coalescent}).
\begin{proposition}
  \label{6-et-7-coalescent}
  \rn{6} and \rn{7} are coalescing for the fully asynchronous dynamic
  when $n$ is odd.
\end{proposition}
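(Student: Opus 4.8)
The plan is to work directly with the coupled chain on pairs of configurations $(c^1,c^2)$ evolving under the \emph{same} update sequence, and to track the disagreement set $D_t=\{z:\ c^1_t(z)\neq c^2_t(z)\}$, writing $d_t(z)=c^1_t(z)\oplus c^2_t(z)$ for its indicator. The first observation, specific to the fully asynchronous dynamic, is that a single step updates exactly one cell $z$ in both copies, hence changes exactly one bit $d(z)$; in particular $|D_t|$ varies by at most one per step, and coalescence is exactly the hitting time of $D=\emptyset$.

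Next I would make the local rules explicit. Rule \rn{6} is $\delta(L,C,R)=(1-L)(C\oplus R)$ (``die if the left neighbour is $1$, otherwise XOR with the right neighbour''), and \rn{7} is $\delta(L,C,R)=(1-L)(1-CR)$. From this I would compute the update of $d(z)$ upon refreshing cell $z$: when $d(z-1)=0$ the two copies see the same left neighbour, so $d'(z)$ depends only on the disagreement pattern (it is $0$ if that common left neighbour is $1$, and $d(z)\oplus d(z+1)$ otherwise), and the left boundary of a disagreement interval is driven deterministically inward; the only value-dependent case is $d(z-1)=1$, where $d'(z)=c^j(z)\oplus c^j(z+1)$ for the copy $j$ whose left neighbour is $0$. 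Thus the boundaries between agreeing and disagreeing regions behave like particles performing partly random walks, and the aim is to show these boundary particles annihilate in pairs within polynomial expected time, leaving $D=\emptyset$.

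The role of $n$ odd enters through a parity/fixed-point analysis, which I expect to be the conceptual key. A short forcing argument shows that for \rn{6} the only configuration left invariant by every local update is all-zeros when $n$ is odd, whereas for $n$ even the checkerboard $1010\cdots$ is a second fixed point (and \rn{7} has the checkerboard as its \emph{only} fixed point for $n$ even, and none at all for $n$ odd). Consequently, for $n$ even, the pair (all-zeros, checkerboard) disagrees forever and coalescence genuinely fails; oddness of $n$ is exactly what removes the period-two pattern onto which a pair of trajectories could lock, and likewise what prevents the ``all cells disagree'' state $c^2=\overline{c^1}$ from persisting. I would use this to argue that, for $n$ odd, the coupled chain has $D=\emptyset$ as its unique recurrent class.

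Finally, to get \emph{polynomial} (rather than merely finite) expected time, I would exhibit a Lyapunov function measuring disagreement---either $|D_t|$ or, more robustly, the number of boundary particles together with a term controlling their positions---and show it is a supermartingale with drift bounded away from $0$ whenever $D\neq\emptyset$, so that standard optional-stopping estimates give $O(n^2)$ (or at worst $O(n^3)$) expected coalescence time on the odd cycle. The main obstacle is precisely the nonlinearity: since \rn{6} and \rn{7} are not affine, Lemma~\ref{lemma:affine-quotient} does not apply and $d$ is \emph{not} itself governed by a cellular automaton, so the disagreement dynamics at a left boundary depends on the actual values $c^j(z),c^j(z+1)$ and not only on the pattern of $D$. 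The delicate part is therefore to control these values along the boundaries well enough to certify a strictly favourable drift, and to rule out value configurations that would let a disagreement interval grow on the right as fast as it shrinks on the left.
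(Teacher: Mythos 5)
Your proposal correctly identifies the two structural facts that matter (rule \rn{6} is $\delta(L,C,R)=(1-L)(C\oplus R)$, and for odd $n$ the all-zero configuration is its unique fixed point, the checkerboard obstruction existing only for even $n$), but it stops short of a proof exactly where the difficulty lies, and the route you sketch for closing the gap would not work as stated. First, your claim that ``the left boundary of a disagreement interval is driven deterministically inward'' fails for an interval of length one: if the common left neighbour is $0$ and $d(z+1)=0$, updating $z$ gives $d'(z)=d(z)\oplus d(z+1)=1$, so an isolated disagreement survives its own update, and healing it requires first changing the underlying \emph{values} next to it. Second, disagreement also spreads to the \emph{left}: updating $z-1$ when $c(z-2)=0$ and $d(z)=1$ creates a new disagreeing cell, so for a singleton $D_t$ sitting on a background of zeros one finds $\mathbb{E}[\,|D_{t+1}|-|D_t|\,]>0$; neither $|D_t|$ nor the number of boundary particles is a supermartingale on general configurations. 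Third, the drift you hope to certify is genuinely absent: in the regime that governs the coalescence time, the relevant quantities (lengths of alternating words, length of the final disagreement region) perform \emph{unbiased} random walks, so the polynomial bound comes from absorption of an unbiased walk on a length-$n$ interval (giving $O(n^{4})$ overall, not $O(n^{2})$ or $O(n^{3})$), and any optional-stopping argument must be of that quadratic type --- which in turn requires knowing that the disagreement set has collapsed to a single interval carried by a one-dimensional walk.

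That last piece of knowledge is the missing idea, and it is the heart of the paper's proof: one first analyses a \emph{single} copy. Under fully asynchronous updates the number of $01$-blocks never decreases, it strictly increases in expected time $O(n)$ as long as a $000$ or $111$ pattern remains, and the configuration then organises itself into alternating words of the form $(01)^+0?$ or $(10)^+1?$ whose borders perform a one-way random walk until, for odd $n$, exactly one defect (a single $00$ or $11$) survives in each copy --- all within expected time $O(n^{4})$. Only after this structure theorem are the two copies coupled: the disagreement set is then a single interval delimited by the two defects, its length is an unbiased walk reflected at $n$ and absorbed near $0$, and coalescence follows with constant probability each time the length hits $1$. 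Without that preliminary reduction your boundary particles move over uncontrolled background values, and the value-dependence you yourself flag as ``the delicate part'' is precisely what blocks the direct coupled-chain argument. Your fixed-point/parity analysis is a nice complementary way to see why $n$ odd is necessary, but it does not substitute for the single-copy convergence argument that makes the coupling tractable.
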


\begin{proof}
We call \emph{number of zones} the number of patterns $01$ in a configuration,
which is the number of ``blocks'' of consecutive $1$ (those blocks are the
zones). We first consider only one copy (one configuration).

\smallskip
\begin{table}[htp]
  \centering
  \caption{Transition table of \rn{6}}
  \begin{tabular}{c||c|c|c|c|c|c|c|c}
    Neighbors &1\,1\,1&1\,1\,0&1\,0\,1&1\,0\,0&0\,1\,1&0\,1\,0&0\,0\,1&0\,0\,0\\
    \hline
    New state &  0  &  0  &  0  &  0  &  0  &  1  &  1  &  0\\
  \end{tabular}
  \label{tab:rn6}
\end{table}
Table~\ref{tab:rn6} shows the transition table of \rn{6}.
Since one cell at a time is updated, and since updating the central cell of $101$
or $010$ does not change its state, zones cannot merge,
i.e. the number of zones cannot decrease.

\begin{table}[htp]
  \centering
  \caption{A possible update sequence of \rn{6}.}
$\begin{matrix}
    \cdots\ 0\ \underbrace{0\ 0\ 1}\ \cdots\\
    \cdots\ \underbrace{0\ 0\ 1}\ 1\ \cdots\\
    \cdots\ 0\ \underbrace{1\ 1\ 1}\ \cdots\\
    \cdots\ 0\ 1\ 0\ 1\ \cdots
\end{matrix}$
  \label{tab:update-seq-rn6}
\end{table}
Let us first show that the number of zones actually increase until there are
no patterns $000$ or $111$.
On each pattern $111$, the central cell can be updated (leading to the pattern $101$)
before its neighbors with probability $\frac13$ and with expected time $n$.
On each pattern $0001$, the sequence of Table~\ref{tab:update-seq-rn6} is possible.
It happens without other update of the four cells with probability $1/4^3$ and
with an expected time of $3\,n$.
So, as long as there are patterns $000$ or $111$, the number of zones increases
with an expected time $O(n)$. Since there are $O(n)$ zones, the total expected
time of this increasing phase is $O(n^2)$.

\medskip
The configuration is then regarded as a concatenation of words on $\{0,\,1\}^*$.
Separation between words are chosen to be the middle of each pattern $00$ and $11$,
so we get a sequence of
words that have no consecutive identical letters, each word being at least two
letter long (that is, words of the language ``$(01)^+0?\ |\ (10)^+1?$'').
We now show that borders between these words follow a one way random walk
(towards right) and
meet, in which case a word disappear with positive probability.
The CA evolves therefore towards a configuration with only one word.
Updating the central cell of $100$ does not change its state, so the borders
cannot move towards left more than one cell.
On the other hand, updating the central cell of $001$ or $110$ makes the border move.
One step of this random walk takes an expected time $O(n)$.
The length of a word also follows a (non-biased) random walk, which reaches $1$
after (on average) $O(n^3)$ steps, leading to the pattern $000$ or $111$.
This pattern disappears with a constant non zero probability like in the
increasing phase.
The expected time for $O(n)$ words to disappear is then $O(n^4)$.

\medskip
Since $n$ is odd, the two letters at the ends of the words are the same, i.e.
there is one single pattern $00$ or $11$, still following the biased random walk.
We now consider again the two copies.
Since this pattern changes the phase in the sequence $(01)^+$, it is therefore a
frontier between a region where both configurations agree and a region where
they do not.
The pattern in the other configuration let us come back to the region where the
configurations have coalesced.

\begin{table}[htp]
  \centering
  \caption{Possible update sequences of \rn{6}.}
  \begin{tabular}{c}
    $\begin{matrix}
      \cdots\,1\,0\,1\,0\,0\,1\,0\,1\,\cdots\\[-0.5ex]
      \cdots\,0\,1\,0\,1\,1\,0\,1\,0\,\cdots
    \end{matrix}$
    \\
    a.
  \end{tabular}
  \hfill
  \begin{tabular}{c}
    $\begin{matrix}
      \cdots\,0\,1\,0\,1\,0\,0\,1\,0\,\cdots\\[-0.5ex]
      \cdots\,0\,1\,0\,1\,1\,0\,1\,0\,\cdots
    \end{matrix}$
    \\
    b.
  \end{tabular}
  \hfill
  \begin{tabular}{c}
    $\begin{matrix}
      \cdots\,0\,1\,0\,1\,1\,0\,1\,0\,\cdots\\[-0.5ex]
      \cdots\,0\,1\,0\,0\,1\,0\,1\,0\,\cdots
    \end{matrix}$
    \\
    c.
  \end{tabular}
  \label{tab:update-seqs-rn6}
\end{table}
Let us now study the length of the (single) region of disagreement.
It follows a non biased random walk determined by the moves of both patterns.
When this length reaches $n$, as in Table~\ref{tab:update-seqs-rn6}.a,
the only change happens when the fourth cell is updated, and it decreases the length.
So, the random walk cannot indefinitely stay in state $n$.
On the other hand, when the length reaches $1$, one possibility is Table~\ref{tab:update-seqs-rn6}.b,
where updating the fifth cell leads to coalescence.
The other possibility is Table~\ref{tab:update-seqs-rn6}.c, where updating the fifth then the fourth cell
leads to the former possibility.
In each case, coalescence happens with a constant non zero probability.
One step of this random walk takes an expected time $O(n)$, the total expected
time of the one word step is thus $O(n^3)$
(details on expected time can be found in~\cite{Latin06}).

Therefore, \rn{6} is coalescing.

The proof for \rn{7} is identical, unless that $000$ leads to $010$, which does not
affect the proof (only the increasing phase is faster).
\end{proof}
\begin{remarque}
If $n$ is even, the proof is valid
until there is only one word, at which point
we get a configuration
without $00$ nor $11$, i.e. $(01)^{\nicefrac n2}$ or $(10)^{\nicefrac n2}$. If both copies have
the same parity, it is coalescence, otherwise both copies perfectly disagree (definitively).
Both happen experimentally.
\end{remarque}

\begin{theoreme}
\label{main-theo}
For the fully asynchronous dynamic,
there are non trivial state-minimal coalescing cellular automata with an arbitrarily large number of
states, and therefore infinitely many non trivial state-minimal coalescing CA.
\end{theoreme}
\begin{proof}
Let ${\cal A}^2$ be the product of a CA
${\cal A}=(Q,\,d,\,V,\,\delta,\,n,\,\mu)$ by
itself, defined as $(Q^2,\,d,\,V,\,\delta^2,\,n,\,\mu)$
where\\
\centerline{$\delta^2\big((a,b),\,(c,d),\,(e,f)\big):= \big(\delta(a,c,e),\,\delta(b,d,f)\big)$}
Intuitively, ${\cal A}^2$ is the automaton we get by superposing
two configurations of ${\cal A}$ and letting both evolve according to $\delta$,
but with the same $M_t$. If ${\cal A}$ is state-minimal, so is ${\cal A}^2$.

Let ${\cal A}$ be a coalescing CA. Then ${\cal A}^2$ converges in
polynomial expected time towards a configuration of states all in
$\set{(q,q)}{q\in Q}$.
From this point, ${\cal A}^2$ simulates ${\cal A}$ (by a mere projection of $Q^2$
to $Q$) and is therefore coalescing (with an expected time at most
twice as long); and so are $({\cal A}^2)^2$, $\big(({\cal A}^2)^2\big)^2$, etc,
that form an infinite sequence of CA with increasing size.
\end{proof}

\begin{proposition}
\label{15-170-semi-coalescent}
  \rn{15} and \rn{170}, for both asynchronous dynamics, either coalesce
  or end in total disagreement, each case with probability $\frac12$
  (with respect to the outcome of initial configuration and $(M_t)$).
\end{proposition}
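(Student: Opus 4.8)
The plan is to exploit the affine structure, exactly as for \rn{60}. Both rules are affine: \rn{15} is $\delta(q_1,q_2,q_3)=1\oplus q_1$ and \rn{170} is $\delta(q_1,q_2,q_3)=q_3$, so by Lemma~\ref{lemma:affine-quotient} I may reason on the disagreement configuration $d:=c\oplus c'$, which evolves under the quotient rule. For \rn{170} the quotient is the shift ``copy your right neighbour'' (the disagreement pattern drifts left); for \rn{15} it is the mirror shift ``copy your left neighbour''. As the two are exchanged by the left/right reflection of the ring, it suffices to treat \rn{170}, and I will read off the two outcomes only at the very end: $d\equiv 0$ is coalescence and $d\equiv 1$ is total disagreement.

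First I would describe the evolution of $d$ through its domain walls (the patterns $01$ and $10$). Under the asynchronous shift, updating a cell never creates a wall, so the number of walls is non-increasing; a wall $(z,z+1)$ is displaced one cell to the left when cell $z$ is updated, and two neighbouring walls annihilate when the single cell between them is updated. Hence the only absorbing states are the two constant configurations $d\equiv 0$ and $d\equiv 1$. In the relative coordinates given by the gaps between consecutive walls, each gap performs a symmetric lazy random walk — one bounding wall makes it grow, the other makes it shrink, at equal rates — and a pair of walls disappears when a gap reaches $0$. This is a system of annihilating random walks, equivalently a one-dimensional voter model in which a cell copies a neighbour; standard estimates for this process on the cycle $\mathcal{U}$ of size $n$ (of the same flavour as those used for \rn{6}, see~\cite{Latin06}) give that one of the two absorbing states is reached in polynomial expected time. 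In particular, whenever the outcome is $d\equiv 0$, it is reached quickly enough to be coalescence in the sense of the definition.

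It then remains to show that each absorbing state is reached with probability $\frac12$. For this I would use the complementation symmetry $d\mapsto\overline d$. The shift rule commutes with complementation, so under one and the same update sequence $(M_t)$ the orbit of $\overline{d_0}$ is the complement of the orbit of $d_0$; in particular $d_0$ reaches the all-$0$ configuration if and only if $\overline{d_0}$ reaches the all-$1$ one. For two independent uniform initial configurations $c_0,c_0'$ the disagreement $d_0=c_0\oplus c_0'$ is uniform on $\{0,1\}^{\mathcal{U}}$, so $d_0$ and $\overline{d_0}$ have the same law and are independent of $(M_t)$; the two absorbing events are therefore equiprobable, and since together they have probability $1$, each has probability $\frac12$. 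Equivalently, by voter-model duality the consensus value is $d_0(Z^\ast)$ for a single ancestral cell $Z^\ast$ which, by translation invariance of the ring, is uniform, so the consensus equals $1$ with probability equal to the expected density of $1$s in $d_0$, namely $\frac12$.

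The same reasoning covers the partially asynchronous dynamic: Lemma~\ref{lemma:affine-quotient} still reduces the problem to the shift on $d$, the wall count is still non-increasing, and the complementation symmetry giving the $\frac12$ is insensitive to the update mode. The step I expect to be the real obstacle is the quantitative one — establishing the polynomial bound on the consensus time, especially in the partially asynchronous case, where several walls may move in one step and the clean ``one gap changes at a time'' picture must be replaced by a coupling with an annihilating-random-walk upper bound. By contrast, the probability-$\frac12$ statement is essentially free once the complementation symmetry is in place.
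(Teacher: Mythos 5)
Your proposal follows essentially the same route as the paper: reduce to the quotient (shift) rule acting on the disagreement configuration via the affine structure, invoke polynomial-time convergence of the asynchronous shift to one of the two constant configurations (the paper simply cites~\cite{Latin06} for this, where you sketch the annihilating-walk/voter-model argument), and obtain the probability $\frac12$ by symmetry, which you usefully make explicit via complementation of a uniform $d_0$ (and duality). The only divergence is cosmetic: with the paper's numbering convention \rn{15} is $1\oplus q_1$ so its quotient is the mirror shift \rn{240} rather than \rn{170} as the paper states, a slip your reflection argument correctly absorbs.
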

\begin{proof}
  \rn{170} (shift) means ``copy your right neighbor''.
  If one runs two configurations in parallel, they agree on a cell
  if and only if
  they agreed on the right neighbor
  before this cell was updated.
  So (as seen in lemma \ref{lemma:affine-quotient}), 
  the agreement/disagreement space-time diagram is the space-time diagram of a CA.
  This CA is still \rn{170}.
  \rn{170} converges in polynomial time towards $0^n$
  (corresponding to coalescence) or $1^n$ (full disagreement) \cite{Latin06}.
  By symmetry, each case has probability $\frac12$.

  \rn{15} means ``take the state opposed to the one of your right neighbor'',
  and the proof is identical (the quotient CA is still \rn{170}).
\end{proof}

\section{Experimental study and phase transition}
\label{experimental-study}
In this section, we will describe experimental results in the context of partially
asynchronous dynamic. We will show that many ECA exhibit coalescence, then make a finer
classification.
Specifically, we will observe that some ECA undergo a phase transition for this property when
$\alpha$ varies. We will experimentally show that this phase
transition belongs always but in one case to the universality class of directed percolation.

\subsection{Classification of CA with respect to coalescence}
\subsubsection{Protocol}
We call \emph{run} the temporal evolution of a CA when all parameters
(rule, size, $\alpha$ and an initial configuration) are chosen.
We stop the run when the CA has coalesced, or when a predefined maximum running
time has been reached.

Let us describe the parameters we used.
\begin{enumerate}
\item Number of cells $n$.
  The main point when choosing $n$ is to check that the results do not
  depend on a particular choice of $n$.
  Some authors (like \cite{Wuensche}) suggest that small is enough
  ($n=30$), others (like \cite{pt-pas-bon}) state the opposite, and we follow the latter.
  A similar problem studied in~\cite{fatesthese} shows a stable behavior for $n\ge200$.
  We set $n=2\,000$ and check that the results do not change for $n=500$.
\item Number of computation steps.
  To measure the asymptotic density, we let the automaton run for $200\,000$
  steps, then measure the density averaged over $10\,000$ steps.
  (A few cases had to be checked with longer times.)

\begin{remarque}
  According to directed percolation theory, near a phase transition, the
  automaton can take an arbitrarily long time before settling down to the
  asymptotic density.
  So, for a few choices of $\alpha$, those parameters are not sufficient to
  measure the true asymptotic density. However, they are big enough to \emph{detect}
  that there is a transition point, and then study more precisely what happens there.
\end{remarque}

\item Update rate $\alpha$. We try to sample the entire set of possible update rates
  (indeed, we will find transitions distributed among this set).
For each of the 88 rules, we do 999 runs: one for
  each value of $\alpha$ ranging from 0.001 to 0.999.

  One might want to average over many runs. To show that we do not need to, we plot the asymptotic density $\rho$ versus $\alpha$.
  The smoothness of the resulting curve (Figures~\ref{fig:dvsa-non-coalesce}
  and~\ref{fig:dvsa-coalesce}) shows that the variance between runs is low.

  The random seed for deciding which cells to update at each step is distinct for each value of $\alpha$.
\item The initial configuration is random and distinct for each value of $\alpha$.
\end{enumerate}

\subsubsection{Results}
We get the following empirical classes of behaviour:
\begin{itemize}
\item[a/] Some CA never coalesce (or take a too long time to be observed):
  \rn{4}, \rn{5}, \rn{12}, \rn{13}, \rn{25},
  \rn{28}, \rn{29}, \rn{33}, \rn{36}, \rn{37}, \rn{41}, \rn{44}, \rn{45},
  \rn{51}, \rn{54}, \rn{60}, \rn{72}, \rn{73}, \rn{76}, \rn{77}, \rn{78},
  \rn{90}, \rn{94}, \rn{104}, \rn{105}, \rn{108}, \rn{122}, \rn{132}, \rn{140},
  \rn{142}, \rn{150}, \rn{156}, \rn{164}, \rn{172}, \rn{200}, \rn{204},
  \rn{232}.

  One can be a bit more precise by plotting the asymptotic density $\rho$ versus $\alpha$.
  No coalescence means that this curve is always above the $x$ axis.
  We most of the time get a
  noisy curve, close to a horizontal line usually around 0.5
  (Figure~\ref{fig:dvsa-non-coalesce}.a).  This noise shows the precision of the
  measure. It is to be compared to Figure~\ref{fig:dvsa-non-coalesce}.b (\rn{204}, which means ``identity'')
  showing the variance in the initial configuration: for this rule, the
  asymptotic density is the same as the initial density.
  The fact that \rn{150} (Figure~\ref{fig:dvsa-non-coalesce}.c) is ``mixing''
  and has a density converging to $0.5$ seems hard to prove.
  Figures~\ref{fig:dvsa-non-coalesce}.d and~\ref{fig:dvsa-non-coalesce}.e are the other types of plots that arise.

  \medskip
  The case of \rn{142} (Figure~\ref{fig:dvsa-non-coalesce}.f) can be solved
  analytically. In this rule, a cell can change its state only if it is in the
  configuration 001 or 110.  The neighboring cells cannot be in one of these
  configurations and cannot change their states.  Thus, the number of $1^*$
  blocks (or equivalently, the number of occurrences of the word $01$) is
  conserved.  So the automaton can coalesce only if those numbers are equal in
  both initial configurations. The probability for this to occurs tends to 0 as
  $n$ grows.

  Experimentally, when it does not coalesce, there is no asymptotic density for
  this rule: the density keeps evolving. The rule still has correlation between
  both configurations, with larges zones of agreement and large zones of
  disagreement.

  \begin{figure}[p]
    \centering
    \begin{tabular}{c}\includegraphics[width=0.45\textwidth]{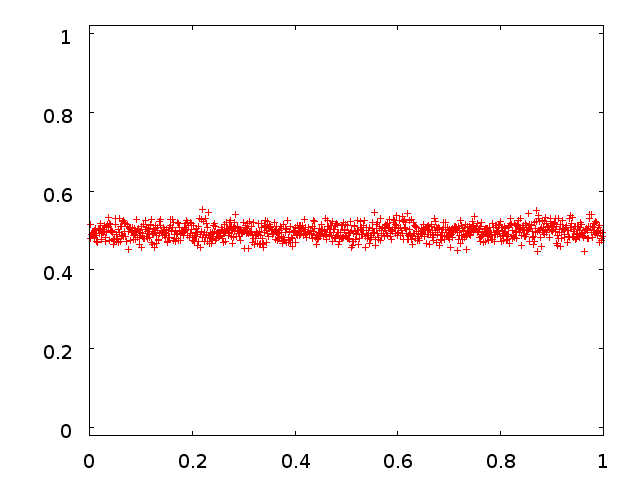}\\a.  \rn{28}\end{tabular}
    \begin{tabular}{c}\includegraphics[width=0.45\textwidth]{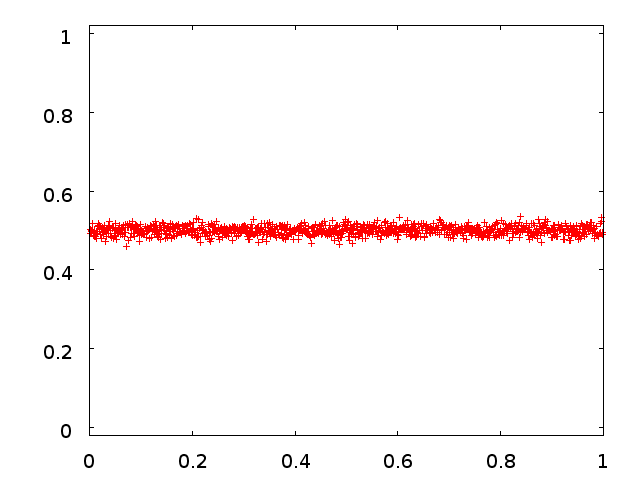}\\b. \rn{204}\end{tabular}
    \begin{tabular}{c}\includegraphics[width=0.45\textwidth]{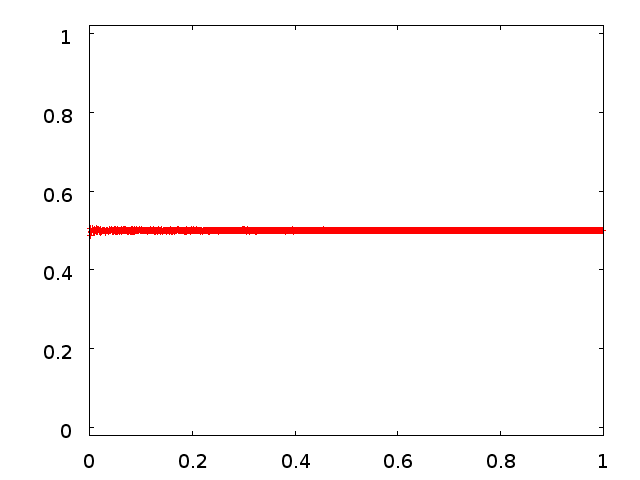}\\c. \rn{150}\end{tabular}
    \begin{tabular}{c}\includegraphics[width=0.45\textwidth]{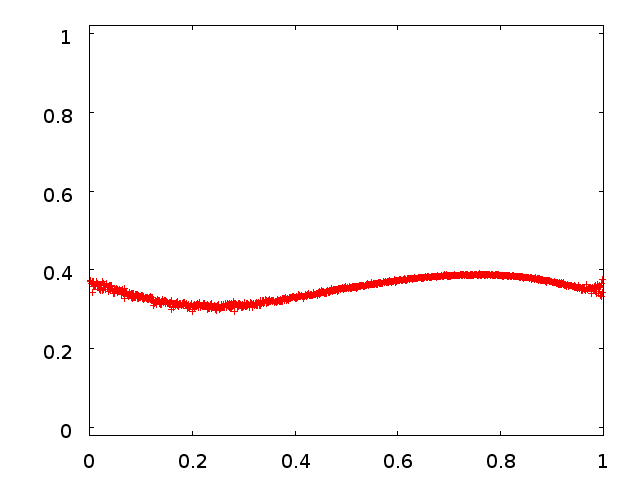}\\d.  \rn{33}\end{tabular}
    \begin{tabular}{c}\includegraphics[width=0.45\textwidth]{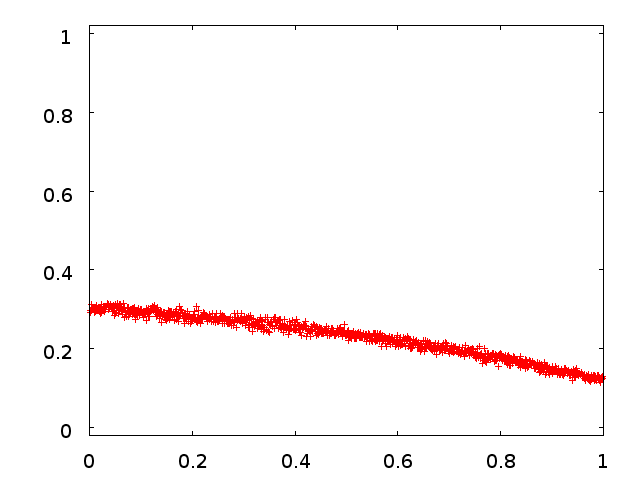}\\e.  \rn{36}\end{tabular}
    \begin{tabular}{c}\includegraphics[width=0.45\textwidth]{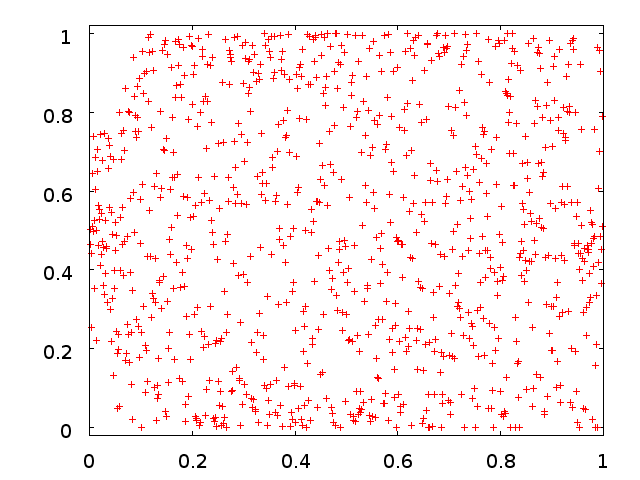}\\f. \rn{142}\end{tabular}
    \caption{Asymptotic density of disagreement cells versus $\alpha$ for some non coalescing rules.
      The scale of both axis is $[0;1]$.
      The first case is typical.}
    \label{fig:dvsa-non-coalesce}
  \end{figure}

\item[b-c/]
  Some CA coalesce rapidly.
\begin{itemize}
\item[b/] The trivial way to do this is to converge to a unique fixed point. One can
  consider the two copies independently, and wait for them to reach the fixed point,
  the CA has then coalesced. This is the case for
  \rn{0}, \rn{2}, \rn{8}, \rn{10}, \rn{24}, \rn{32}, \rn{34}, \rn{38},
  \rn{40}, \rn{42}, \rn{56}, \rn{74}, \rn{128}, \rn{130}, \rn{134}, \rn{136}, \rn{138},
  \rn{152}, \rn{160}, \rn{162}, \rn{168}.

  For most ``captive'' automata (in the ECA case, captive rules are rules for which 000 leads to 0 and 111 leads to 1),
  it has been proven in~\cite{Latin06} whether the rule converges to $0^n$ or not.
  Those rules that provably tends to $0^n$ independently, and thus are coalescing, are:
  \rn{128}, \rn{130}, \rn{136}, \rn{138}, \rn{152}, \rn{160}, \rn{162}, \rn{168}.
\item[c/] The non trivial rules are
  \rn{3}, \rn{19}, \rn{35}, \rn{46}, \rn{154}.
  Rule \rn{19} converge especially slowly for small $\alpha$.
\end{itemize}
\item[d/] Some CA combine the previous behaviors, depending on $\alpha$ (Figure~\ref{fig:dvsa-coalesce}).
  \rn{6}, \rn{18}, \rn{26}, \rn{106}, \rn{146} combine a/ and b/;
  \rn{50} combines b/ and c/;
  \rn{1}, \rn{9}, \rn{11}, \rn{27}, \rn{57}, \rn{62}, \rn{110}, \rn{126} combine a/ and c/ (see Figure~\ref{fig:time-space});
  \rn{58} combines a/, b/ and c/.

  \begin{figure}[p]
    \centering
    \begin{tabular}{c}\includegraphics[width=0.45\textwidth]{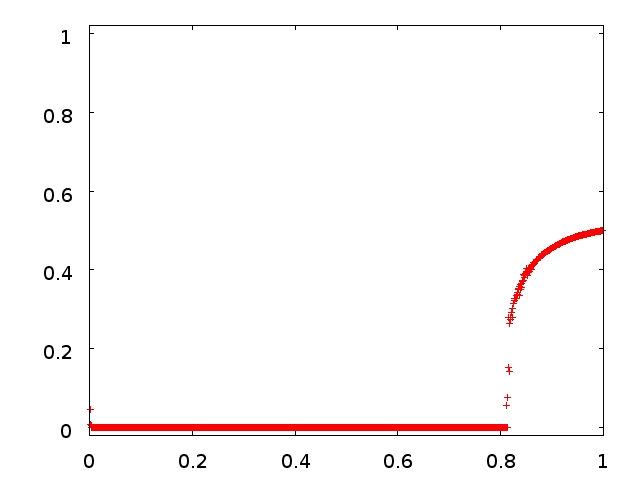}\\a. \rn{106}\end{tabular}
    \begin{tabular}{c}\includegraphics[width=0.45\textwidth]{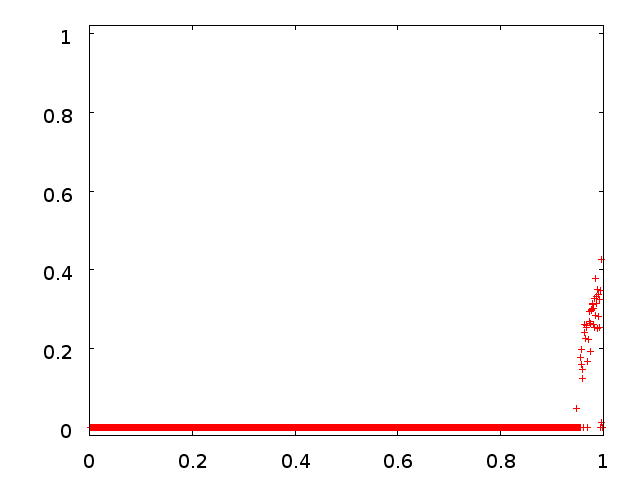}\\b.  \rn{11}\end{tabular}
    \begin{tabular}{c}\includegraphics[width=0.45\textwidth]{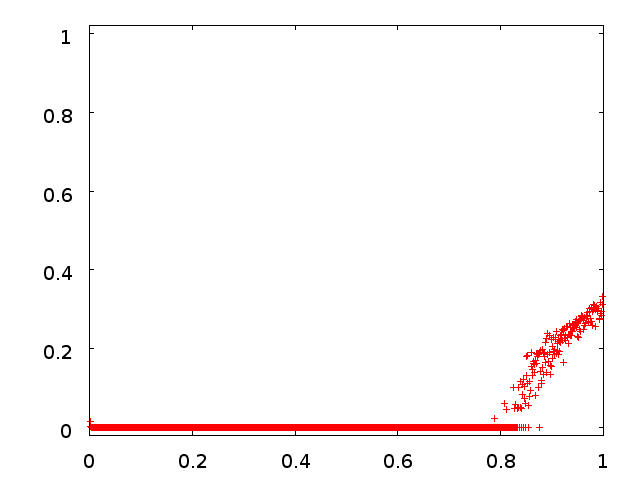}\\c.  \rn{27}\end{tabular}
    \begin{tabular}{c}\includegraphics[width=0.45\textwidth]{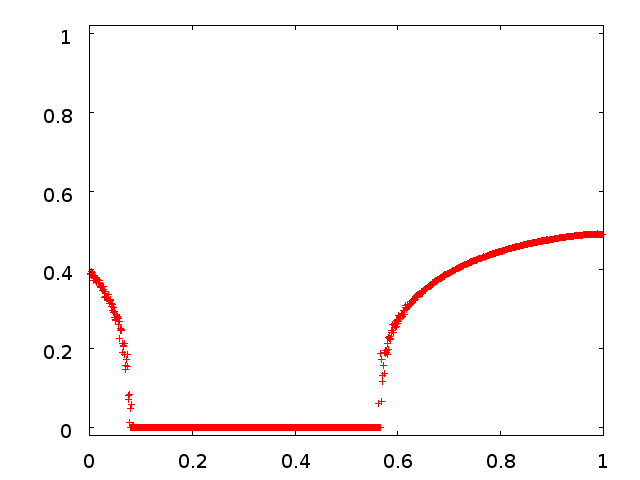}\\d. \rn{110}\end{tabular}
    \begin{tabular}{c}\includegraphics[width=0.45\textwidth]{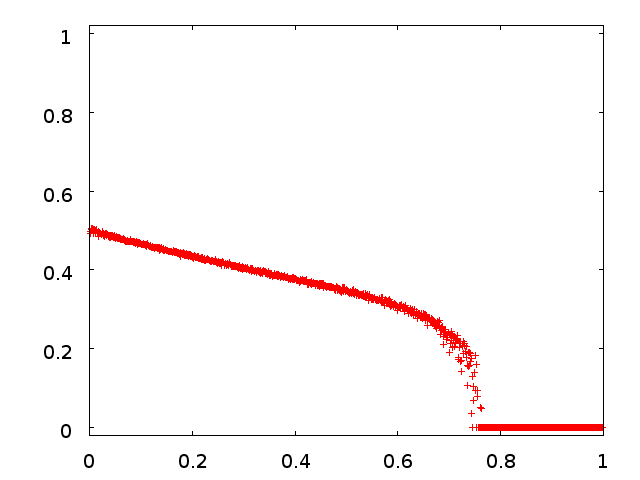}\\e.  \rn{57}\end{tabular}
    \begin{tabular}{c}\includegraphics[width=0.45\textwidth]{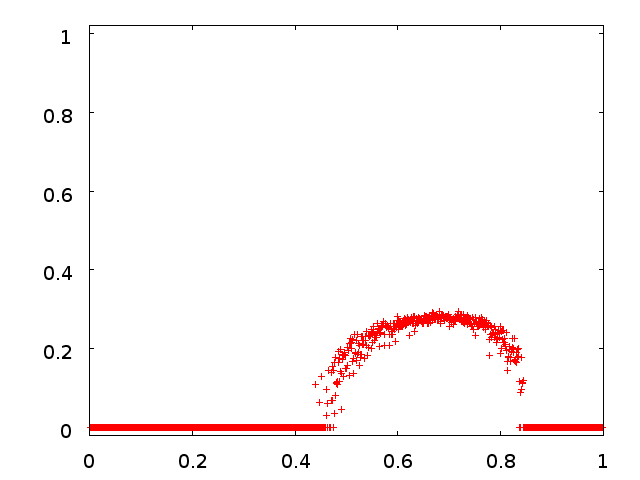}\\f.  \rn{58}\end{tabular}
    \caption{Same as Figure~\ref{fig:dvsa-non-coalesce}, but for some rules which coalesce or not, depending on $\alpha$.
      The first case is the most common.
      Those rules, undergoing a transition, are studied in greater detail in Section~\ref{directed-perco-in-our-model}.}
    \label{fig:dvsa-coalesce}
  \end{figure}
\item[e/] Some CA end in either full agreement between configurations (coalescence)
  or full disagreement, depending on the outcome of $(M_t)$ and the initial configuration:
  \rn{14}, \rn{15}, \rn{23}, \rn{43}, \rn{170}, \rn{178}, \rn{184}.

  These CA take a longer time to reach this asymptotic behavior (this time experimentally seems to be
  $\Theta(n^2)$), so the asymptotic density versus $\alpha$ plots are not two
  horizontal lines, one at $\rho=0$ and the other at $\rho=1$ (Figure~\ref{fig:dvsa-agree-or-disagree}).  They take an even
  longer time for some specific values of $\alpha$
  (thus the ``bow tie'' plot: the closer $\alpha$ is to this specific value,
  the longer the density stays close to 0.5):
  low $\alpha$ (Figure~\ref{fig:dvsa-agree-or-disagree}.a),
  $\alpha=0.5$ (Figure~\ref{fig:dvsa-agree-or-disagree}.b)
  or elsewhere (Figure~\ref{fig:dvsa-agree-or-disagree}.c).
  \begin{figure}[p]
    \centering
    \begin{tabular}{c}\includegraphics[width=0.45\textwidth]{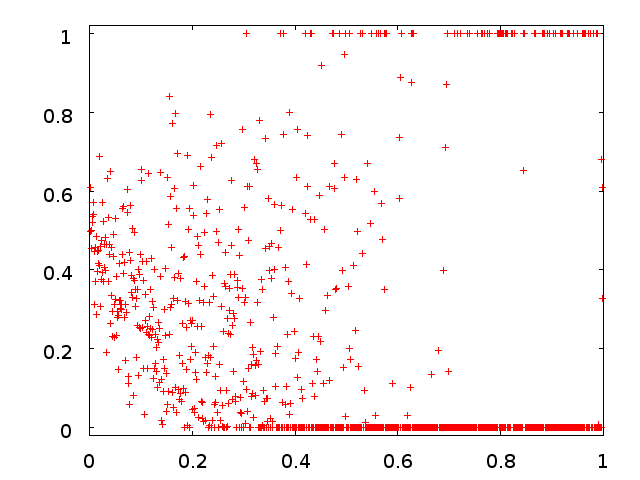}\\a. \rn{184}\end{tabular}
    \begin{tabular}{c}\includegraphics[width=0.45\textwidth]{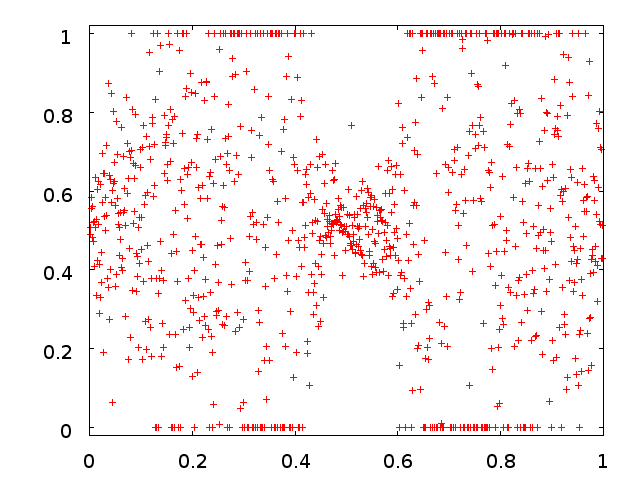}\\b. \rn{178}\end{tabular}
    \begin{tabular}{c}\includegraphics[width=0.45\textwidth]{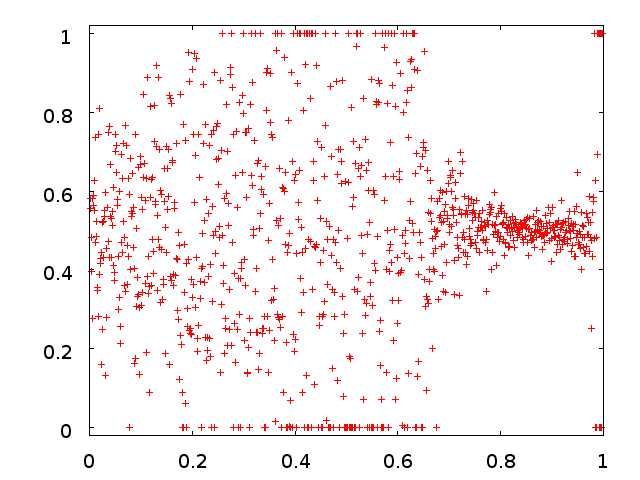}\\c.  \rn{43}\end{tabular}
    \caption{Same as Figure~\ref{fig:dvsa-non-coalesce}, but for some rules ending either in full agreement or full disagreement.}
    \label{fig:dvsa-agree-or-disagree}
  \end{figure}
\item[f/]
  \rn{22} and \rn{30} combine  the previous point (for small $\alpha$) with a/,
  \rn{7}              combines it                 (for small $\alpha$) with c/.
  Note that \rn{30} also shows a varying asymptotic density as it goes from a/ to e/, but due to the difficulties mentionned in e/ this asymptotic density has not been measured.
\end{itemize}

Let us now study the phase transition ``coalescence or not'' when $\alpha$ changes (point d/),
that is,
\rn{1}, \rn{9}, \rn{11}, \rn{27}, \rn{57}, \rn{58}, \rn{62}, \rn{110}, \rn{126}
and \rn{6}, \rn{18}, \rn{26}, \rn{106}, \rn{146}.
Some rules (\rn{9}, \rn{58}, \rn{110}, \rn{126}) show two phase transitions,
one for $\ell$ow $\alpha$, denoted by a subscript $\ell$ like in \rnnl, one for
$h$igh $\alpha$, denoted by a subscript $h$ like in \rnnh.

The next subsection (\ref{Phase transition and directed percolation}) recalls the useful background before studying this phase transition:
it briefly recalls what a phase transition is,
it presents a conjecture implying that our transition belongs to the directed percolation universality class,
and describes this directed percolation class.

\subsection{Phase transition and directed percolation}
\label{Phase transition and directed percolation}
\subsubsection{Phase transition}
A phase transition is an abrupt change in macroscopic properties of a system
with only a small change of control a parameter, say $T$, around a critical value $T_c$.
This paper is concerned only with second order phase transitions,
or continuous phase transitions, which can be characterized by \emph{critical exponents}.
If one let the parameter $T$ vary near the phase transition (occuring at $T=T_c$), all other variables being fixed, a measurable quantity $C$ has a power law behaviour
$C\propto |T-T_c|^\beta$ at least on one side of $T_c$.
Several exponents are defined, depending on the quantity measured.

Remarkably, many systems with no a priori relation turn out to have the same critical exponents.
A \emph{universality class} is defined as all the systems having the same set of critical exponents.

\subsubsection{A conjecture on damage spreading}
Chaos theory deals with the sensitivity to initial condition of \emph{deterministic} systems.
To also study the influence of small perturbations on stochastic systems, \cite{KaufgmanDamageSpreading} introduced \emph{damage spreading}.
In this model, two copies of a stochastic model are run in parallel with the same source of random bits,
starting from different initial configurations (often they are set to differ in exactly one site).

One measures the temporal evolution of the proportion of differing sites, called the Hamming distance.
If this goes to zero, i.e. if both copies become identical, the initial ``damage'' has ``healed'', otherwise the damage is said to spread.

There is a conjecture by~\cite{grassberger-damagespread-in-dp} stating that,
if a transition occurs between healing and spreading in a stochastic spin model,
the universality class of this phase transition is always the same,
namely the one of \emph{directed percolation}, which will be presented in the next paragraph.

There are some conditions for this conjecture:
\begin{enumerate}
\item Only short range interactions in time and space,\label{enum:short}
\item translational invariance\label{enum:translat},
\item non vanishing probability for a site to become healed locally\label{enum:heal},
\item the transition does not coincide with another phase transition\label{enum:other-transition}.
\end{enumerate}

Points~\ref{enum:short} and~\ref{enum:translat} are easily fullfilled for CA.
We will discuss points~\ref{enum:heal} and~\ref{enum:other-transition} in Section~\ref{sec:measure-alpha}.

\subsubsection{The Model of Directed Percolation}
\label{directed-perco}
An more detailed introduction to directed percolation can be found
in~\cite{fates2006} (note that this papers cites a different conjecture of Grassberger than the one we deal with).
A survey of directed percolation is contained in~\cite{hinrichsen-2000-49}, which also covers \emph{damage spreading}.

Isotropic percolation was first defined when studying propagation of a fluid through a porous medium.
It has been mathematically modelled as an infinite square grid where each site has the four nearest sites as neighbors.
Each bond between two neighbors can be open (letting the fluid go through) with probability $p$ or closed with probability $1-p$, independently of all other bonds.

The question is whether the fluid inserted at one point will pass through the
medium, i.e. whether this point is part of an infinite network of sites
connected by open bonds.

Directed percolation appears when one adds gravity to the model,
 i.e. when the fluid is only allowed to travel in one direction (Figure~\ref{fig:perco}).
Static 2D directed percolation can also be seen as a 1D dynamical model where some sites are ``active'' (where active can mean wet, infected, etc.).
An active cell can stay active or die (become inactive), and make its neighbors active.
Depending on the probabilities of these possibilities, active regions spread or disappear.
Cells can only have an influence on the future states of their neighbors, thus the \emph{directed} percolation.

The order parameter measured is the density of active states as a function of $p$ and time, $\rho(p,t)$.
It is zero in one phase and non-zero in the other.
There exists a critical probability $p_c$ which is the limit between two phases.
\begin{itemize}
\item For $p<p_c$, the asymptotic density $\rho(p,\infty)$ is $0$;
\item for $p>p_c$ we have a power law  $\rho(p,\infty)\propto (p-p_c)^\beta$;
\item for $p=p_c$ the density goes to $0$ as $\rho(p_c,t)\propto t^{-\delta}$.
\end{itemize}

\begin{figure}[htp]
  \centering
  \includegraphics[width=0.8\textwidth]{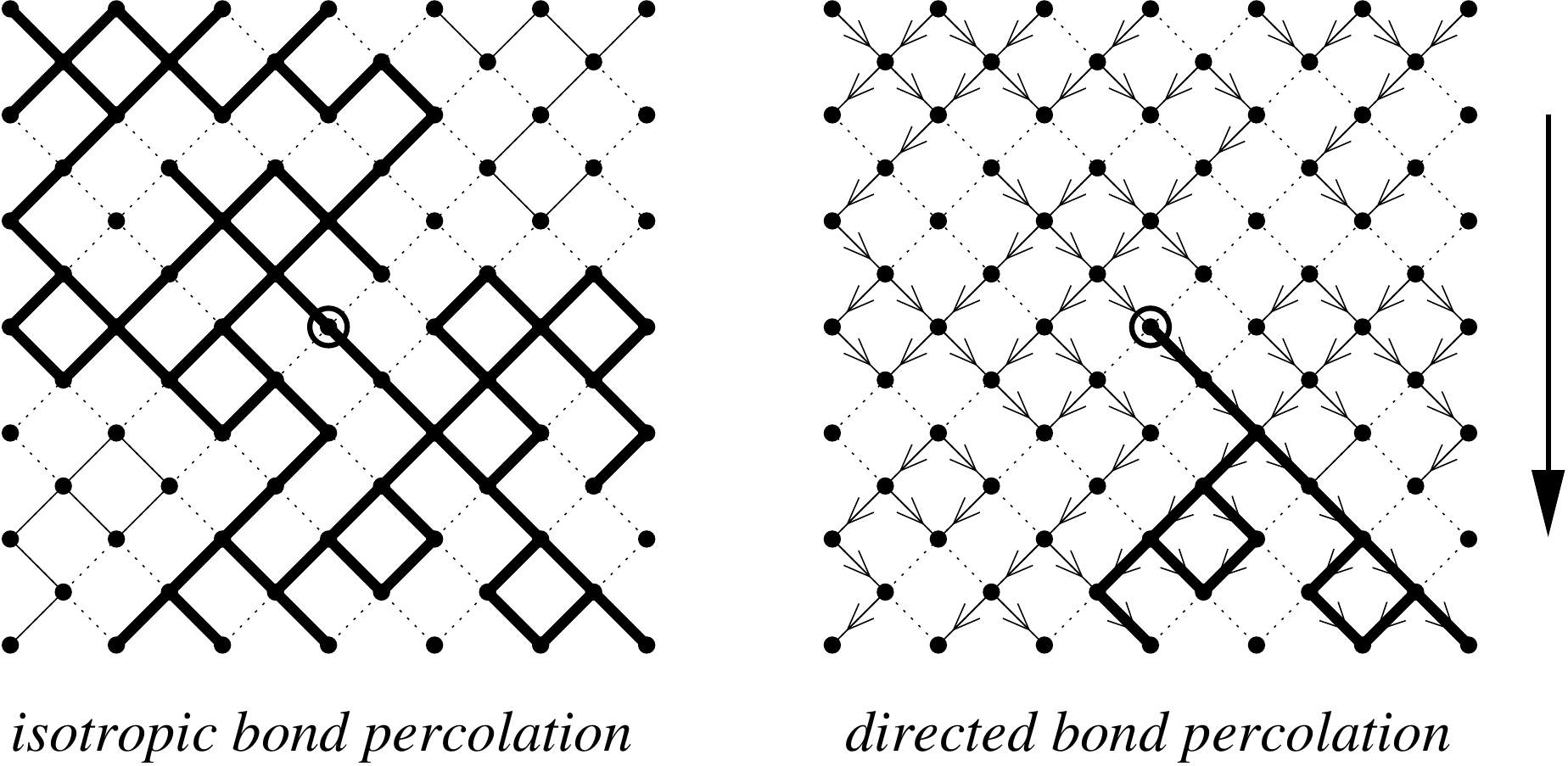}
  \caption{Isotropic (left) and directed (right) percolation. Figure reprinted from~\cite{hinrichsen-2000-49}.}
  \label{fig:perco}
\end{figure}

\subsection{Directed percolation in our model}
\label{directed-perco-in-our-model}

Our active sites are the cells where the configurations disagree.
Density of such sites is written $\rho(\alpha,t)$, or $\rho(\alpha)$ for the asymptotic density.
The pairs of configurations where all cells agree constitute the absorbing set.
Percolation transition (coalescence or not) appears when varying $\alpha$, see Figure~\ref{fig:time-space}.
Note that there is no direct relation between $\alpha$ and $p$.
The aim is thus to identify $\beta$ assuming that
$\rho(\alpha) \propto\,|\alpha-\alpha_c|^{\beta}$ for some $\alpha_c$.
Like many authors, we will
focus on $\beta$ and consider it as sufficient to test directed percolation.

\begin{figure}[htp]
  \centerline{\hbox{
    \includegraphics[width=0.49\linewidth]{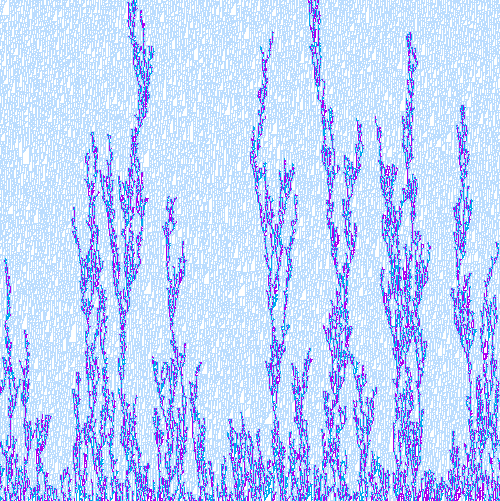}
    \includegraphics[width=0.49\linewidth]{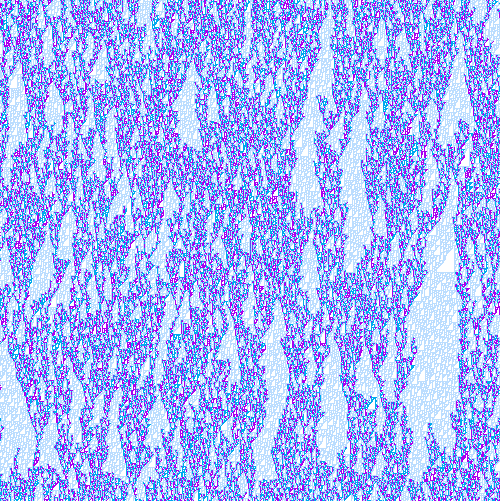}}}
  \caption{%
    Rule \rn{110}, $n=500$.
    Time goes upwards, during $500$ steps.
    Sites where both configurations disagree are dark, coalesced sites are light
    (with light blue standing for state $1$, white for $0$).
    Left: sub-critical phase ($\alpha=0.47<\alpha_c\simeq 0.566$), branches die.
    Right: supercritical phase ($\alpha=0.65>\alpha_c$), active sites spread.}
  \label{fig:time-space}
\end{figure}

\medskip
We deal with finite configurations and can thus be subject to finite size effects.
One effect of particular importance is the following.
Take a rule with a phase transition, there is an update rate $\alpha$ for wich the rule is coalescing,
i.e. reaches total agreement in polynomial time.
In the non coalescing regime, with low probability,
  the outcome of the random bits determining which cells get updated can make
  the CA simulate the coalescing regime for a fixed number of steps.
So, if a CA can coalesce for a given $\alpha$, it can coalesce for any $\alpha$ in $(0;1)$.
The true asymptotic regime is thus always coalescence.

However, the long term behaviour we're interested in,
and the one we can observe with practical simulations,
is after a polynomial time only.
It is a ``long transient'' as opposed to the short transient happening after initialization.
The sytem stays in this ``long transient'' regime for a super-polynomial time, then reaches the true asymptotic regime.
(It can be compared to a living system where one studies reactions to and transient regime after various events in life,
but where the asymptotic regime is always death.)
Note that coalescence was precisely defined as coalescence in polynomial expected time.

\subsubsection{Measure of \texorpdfstring{$\alpha_c$}{critical alpha}}
\label{sec:measure-alpha}
To measure $\beta$, we use the method described in~\cite{hinrichsen-2000-49} that advise to first measure $\alpha_c$ before doing a fit to measure $\beta$, instead of fitting $\alpha$ and $\beta$ at the same time.

To measure $\alpha_c$, the method implies plotting
the density $\rho$ of active sites versus time in logarithmic scale and finding the
$\alpha$ value for which one gets a straight line (for $\alpha<\alpha_c$, the AC
coalesce faster, for $\alpha>\alpha_c$, it has a positive asymptotic $\rho$).
We used random initial configuration with each state equiprobable.
To get readable plots we needed up to $n=10^6$ cells and $10^7$ time steps.
We get (recall that $\alpha_c$ is not universal, it is just used to compute
$\beta$):
  \begin{center}
    \begin{tabular}{|l|c@{\ }c@{\ }c@{\ }c@{\ }c@{\ }c@{\ }c@{\ }c@{\ }c|}
      \hline
      rule           &\rn{1}&\rn{6}&\rnnl&\rnnh&\rn{11}&\rn{18}&\rn{26}&\rn{27}&\rn{57}\\ \hline
      $\alpha_c >...$& 0.101& 0.06 &0.073&0.757& 0.9575&0.7138 &0.4747 & 0.856 & 0.749 \\
      $\alpha_c <...$& 0.103& 0.08 &0.074&0.758& 0.9583&0.7141 &0.4751 & 0.858 & 0.750 \\
      \hline
    \end{tabular}\nopagebreak\par\smallskip
    \begin{tabular}{|l|c@{\ }c@{\ }c@{\ }c@{\ }c@{\ }c@{\ }c@{\ }c@{\ }c|}
      \hline
      rule           &\rnchl&\rnchh&\rn{62}&\rn{106}&\rncdl&\rncdh&\rncvsl&\rncvsh&\rn{146}\\ \hline
      $\alpha_c >...$&0.4745&0.8408& 0.598 & 0.8143 & 0.073&0.566 & 0.101 & 0.720 &0.6750  \\
      $\alpha_c <...$&0.4748&0.8412& 0.599 & 0.8147 & 0.075&0.567 & 0.103 & 0.721 &0.6753  \\
      \hline
    \end{tabular}
  \end{center}
Note that the $\alpha_c$ of \rn{1} and \rncvsl, like \rnnl and \rncdl, are very close, and may be equal.
Also, $\alpha_c$ for \rn{6} and \rn{11} are quite close to 0 and 1.

\medskip
\paragraph{Comparison to a simpler model}
In~\cite{fates2006}, a quite similar problem is studied.
The author chooses the more classical model with only one configuration, but still the
same asynchronous updating.
The  ``1'' is taken as the active state, and $0^n$ as the absorbing dead state.
This article showed experimentally that some rules
(\rn{6}, \rn{18}, \rn{26}, \rn{50}, \rn{58}, \rn{106} and \rn{146})
belong to the directed percolation class: they
converge to $0^n$ for small enough $\alpha<\alpha_c$ for some $\alpha_c$, and
the asymptotic density $d'$ of cells in the state ``1'' near $\alpha_c$ is
$C(\alpha-\alpha_c)^\beta$ for some constant $C$.

Such rules are clearly trivially coalescing for low $\alpha$: both configurations independently converge to $0^n$.
So, if one of these rules is not coalescing for high $\alpha$, it will undergo a phase transition in our model.
Here is a review of those rules:

\begin{itemize}
\item \rn{50} is always coalescing, either trivially or not, and thus isn't of interest in the present work.
  The other rules are not always coalescing.
\item \rn{58} has a different $\alpha_c$ than in our work and is studied in Section~\ref{measure-beta}.
\item \rn{6} also has a different $\alpha_c$. However, near $\alpha_c$, both configurations
  quickly becomes mostly composed of periodic regions repeating the pattern ``01''.
  Depending on the parity, both configurations agree or disagree on the whole overlap of such regions.
  Inside such an overlap, a cell state does not change, whether updated or not.
  So, point~\ref{enum:heal} of Grassberger's conjecture is not fullfilled: a single site cannot always heal.
  Indeed, the density versus time plot is not a power law in the form $t^{-\delta}$, as one would expect for directed percolation.
  See Figure~\ref{fig:rn6}.
\item Finally,
\rn{18}, \rn{26}, \rn{106} and \rn{146} seem to have the \emph{same} $\alpha_c$
than the one measured in~\cite{fates2006}
(at least up to the available precision $\pm3.10^{-4}$).

In our model with two configurations, if we assume both configurations to be
independent\footnote{The present article studies specifically the cases where
  both configurations are equal, but since there is only trivial coalescence, or
  no coalescence for those rules, this assumption is not trivially false.}, then the
density of disagreement cells is $2d'(1-d')$
(as defined in the previous paragraph, $d'$ is the density of ``1'' in the model with only one configuration).
$2d'(1-d') = 2 C(\alpha-\alpha_c)^\beta
(1-C(\alpha-\alpha_c)^\beta) = 2C(\alpha-\alpha_c)^\beta-2C^2(\alpha-\alpha_c)^{2\beta}$.
The second term is negligible (second order) near $\alpha_c$, so (under the
assumption of independence) the rule still belongs to the directed percolation
class.
Note however that this is harder to observe experimentally because of the higher precision needed (precision that we do not currently reach, although we observe lower $β$ for those rules, and greater sensitivity to the sampling interval for $α$).
Anyway, those rules do not meet point~\ref{enum:other-transition} of Grassberger's conjecture, since there is another phase transition happening in the underlying, single configuration, CA.

For both reasons, we leave those rules for future work.
\end{itemize}

\begin{figure}[htp]
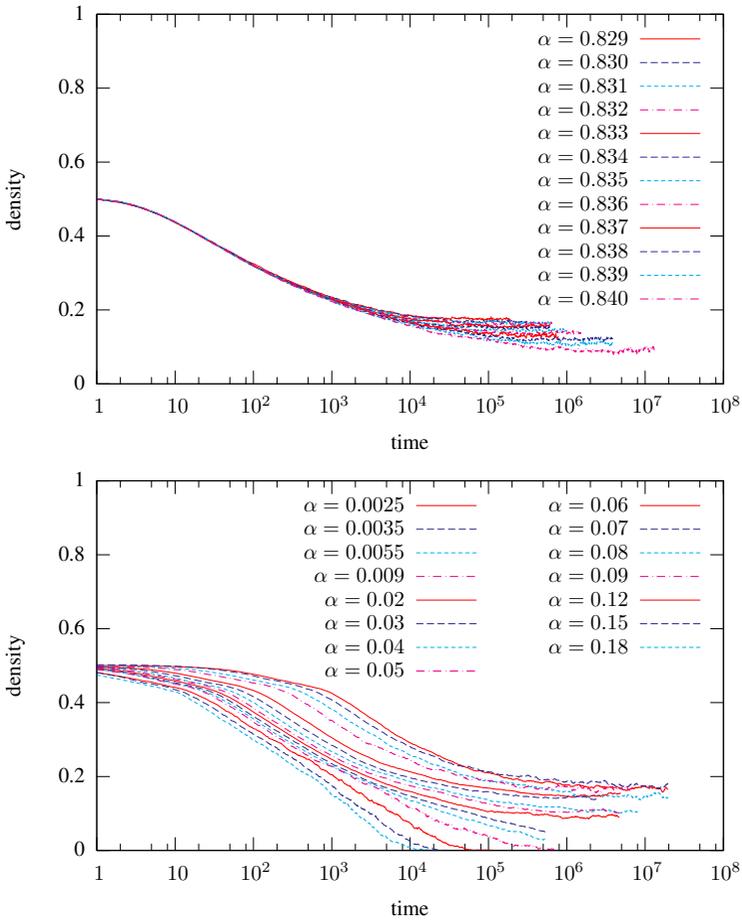

  \centering
  \scalebox{0.8}{\input{figures/rn58.gnuplot_tex}}
    \scalebox{0.8}{\input{figures/rn6.gnuplot_tex}}
  \caption{Density versus time for a rule in the directed percolation class (\rn{58}, top) and for \rn{6} (bottom).
  The asymptotic density goes to $0$ as $α→α_c$ (and is $0$ afterwards, i.e. coalescence happens).}
  \label{fig:rn6}
\end{figure}

\subsubsection{Measure of \texorpdfstring{$\beta$}{beta}}
\label{measure-beta}

Let us now plot $\rho$ versus $\alpha$ near $\alpha_c$
 (Figure~\ref{fig:fit_beta}).
We actually plot $\log\rho$ versus $\log(\alpha-\alpha_c)$ and fit a straight line, the slope of which is an estimator of $\beta$.
It is important to do the fit against $\log\rho$ (and not $\rho$),
so that all errors get the same weight when fitting a line on the log-log plot.
It is also possible to adjust $\alpha_c$ to get a straight line.
This method experimentally has the same computing time/precision ratio as determining $\alpha_c$ beforehand.

\begin{figure}[tbp]
  \centerline{%
    \quad
    \scalebox{0.8}{\input{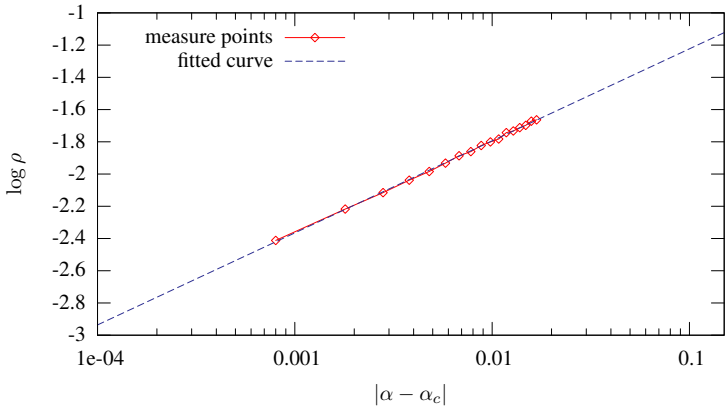}}%
  }
  \caption{Measuring $\beta$ for \rnchh.}
  \label{fig:fit_beta}
\end{figure}

\textbf{Protocol}\quad
The protocol is only semi-automatic.
We choose $n$ between $10\,000$ and $1\,000\,000$ to get a reasonably smooth line on the density versus time plot.
We visually check that the density has reached a steady state, then average the density over at least half a decade.
This yields one measure point. We repeat this process for several values of $α$ near $α_c$.

The fit gives the following ranges, taking into account uncertainty about
$\alpha_c$ and which points to keep for the fit.
Experimental value for $\beta$ measured on other systems is $0.276$.
\begin{center}
  \begin{tabular}{|l|c@{\ }c@{\ }c@{\ }c@{\ }c@{\ }c|}
    \hline
    rule        &\rn{1} & \rnnl & \rnnh &\rn{11}&\rn{27}&\rn{57}\\
    \hline
    $\beta >...$& 0.265 & 0.270 & 0.273 & 0.264& 0.258 & 0.248 \\
    $\beta <...$& 0.279 & 0.295 & 0.283 & 0.326& 0.305 & 0.281 \\
    \hline
  \end{tabular}\nopagebreak\par\smallskip
  \begin{tabular}{|l|c@{\ }c@{\ }c@{\ }c@{\ }c@{\ }c@{\ }c|}
    \hline
    rule        &\rnchl&\rnchh&\rn{62}&\rncdl&\rncdh&\rncvsl&\rncvsh\\
    \hline
    $\beta >...$&0.270 & 0.248& 0.270 &0.270 & 0.271&0.250  & 0.260\\
    $\beta <...$&0.297 & 0.274& 0.281 &0.291 & 0.281&0.276  & 0.276\\
    \hline
  \end{tabular}
\end{center}
As expected, all models remaining at the end of Section~\ref{sec:measure-alpha} seem to belong to the universality class of directed percolation.
We cannot definitely conclude about \rn{11} and \rn{27}, due to higher noise and thus lack of precision.

\section{Acknowledgments}
We would like to thanks Peter Grassberger for useful advice, and Nazim Fates for sharing early results.

Source code is available on \url{cimula.sf.net}.

\bibliographystyle{plain}
\bibliography{biblio}
\end{document}